\newtheorem{theorem}{Theorem}
\newtheorem{example}{Example}
\newtheorem{algorithm}{Algorithm}
\bmdefine{\Bt}{t}
\bmdefine{\BX}{X}
\bmdefine{\BY}{Y}
\bmdefine{\BZ}{Z}
\bmdefine{\BB}{B}
\bmdefine{\BM}{M}
\bmdefine{\BD}{D}
\bmdefine{\Bi}{i}
\bmdefine{\Bj}{j}
\bmdefine{\Bx}{x}
\bmdefine{\By}{y}
\bmdefine{\Bz}{z}
\bmdefine{\Bv}{v}
\bmdefine{\Bw}{w}
\bmdefine{\Bn}{n}
\bmdefine{\Ba}{a}
\bmdefine{\Bb}{b}
\bmdefine{\Bc}{c}
\bmdefine{\Be}{e}
\bmdefine{\Bu}{u}
\bmdefine{\Bp}{p}
\bmdefine{\Bzero}{0}
\bmdefine{\Bone}{1}
\title{A Localization Approach to Improve Iterative Proportional Scaling in 
Gaussian Graphical Models}
\author{
Hisayuki Hara\\
Department of 
Technology Management for Innovation\\
University of Tokyo \smallskip\\
Akimichi Takemura\\
Graduate School of Information Science and Technology\\
University of Tokyo}
\date{May 2008}
\begin{document}
\maketitle

\begin{abstract}
 We discuss an efficient implementation of the iterative proportional
 scaling procedure in the multivariate Gaussian graphical models. 
 We show that the computational cost can be reduced by localization of
 the update procedure in each iterative step by using the structure of
 a decomposable model obtained by triangulation of the graph associated
 with the model.   
 Some numerical experiments demonstrate the competitive performance of
 the proposed algorithm. 
\end{abstract}

\section{Introduction}
Since Dempster \cite{Dempster} introduced 
a multivariate Gaussian graphical model, also called a
covariance selection model, 
it has been investigated by many authors 
from both theoretical and practical viewpoints. 
On the theory of a Gaussian graphical model, 
see e.g.\ Whittaker \cite{Whittaker}, Lauritzen \cite{lauritzen1996}, 
Cox and Wermuth \cite{Cox-Wermuth} and Edwards \cite{Edwards}.
In recent years  much effort has been devoted to 
application of the Gaussian  graphical model to identify sparse large
network systems, especially genetic networks (e.g.\ \cite{Dobra},
\cite{Li-Gui}, \cite{Drton-2}),  
and the efficient implementation of the inference in the model 
has been extensively studied. 
In this article we discuss an efficient algorithm to compute the
maximum likelihood estimator (MLE) of the covariance matrix in the
Gaussian graphical models. 

When the graph associated with the model is a chordal graph, 
the model is called a decomposable model.
For a decomposable model, the MLE of the covariance matrix is
explicitly obtained.   
For general graphical models other than decomposable models, however, 
we need some iterative procedure to obtain the MLE. The iterative
proportional scaling (IPS) procedure is one of popular 
algorithms to compute the MLE.

The IPS was first introduced by Deming and Stephan \cite{Deming-Stephan} to
estimate cell probabilities in contingency tables subject to certain
fixed marginals. Its convergence and statistical properties have
been well studied by many authors 
(e.g. \cite{Ireland-Kullback}, \cite{Fienberg}) 
and the IPS have been justified in a more general framework
(\cite{Csiszar}). 
Speed and Kiiveri \cite{Speed-Kiiveri} first formulated the IPS 
in a Gaussian graphical model and gave a proof of its convergence. 

However, from a practically point of view, a straightforward application
of the IPS is often computationally too expensive for larger models. 
In the contingency tables several techniques have been developed to
reduce both storage and computational time of the IPS (e.g. \cite{Jirousek},
\cite{Jirousek-Preucil}). 
Badsberg and Malvestuto \cite{Badsberg-Malvestuto} proposed 
a localized implementation of the IPS by using the structure of
decomposable models containing 
the graphical model.
Such a technique is called the chordal extension.
The local computation based on the chordal extension has been a popular 
technique in many fields for numerical computation of a sparse linear
system(e.g. \cite{Rose}, \cite{FKMN}). 

In the present paper 
we describe a localized algorithm based on the chordal extension for
improving the computational efficiency of the IPS in the Gaussian
graphical models.   
Let $\Delta$ be the set of variables which corresponds to the set of
vertices of the graph associated with the model.
The straightforward implementation of the IPS requires approximately 
$O(\vert \Delta \vert^3)$ time in each iterative step for large models. 
In the similar way to the technique in Badsberg and Malvestuto
\cite{Badsberg-Malvestuto},   
we localize the update procedure in each step by using the structure of
a decomposable model containing the model.
The proposed algorithm is shown to require $O(\vert \Delta \vert)$
time for some models. 

The problem of computing the MLE is equivalent to the positive definite
matrix completion problem. 
The proposed algorithm based on the chordal extension is closely related
to the technique discussed by Fukuda et al. \cite{FKMN} in the framework
of the positive definite matrix completion problem but not the same. 

As pointed out in Dahl et al.\ \cite{Dahl}, 
the implementation of the IPS requires enumeration of all maximal
cliques of the graph and this enumeration has an exponential complexity.  
Hence the application of the IPS to large models may be limited.     
However in the case where the model is relatively small or 
the structure of the model is simple, it may be feasible to enumerate
maximal cliques.  
In this article we consider such situations. 

The organization of this paper is as follows. 
In Section 2 we summarize notations and basic facts on graphs and give a
brief review of Gaussian graphical models and the IPS algorithm for
covariance matrices. 
In Section 3 we propose an efficient implementation of the update
procedure of the IPS. 
In Section 4 we perform some numerical experiments to illustrate the
effectiveness of the proposed procedure. 
We end this paper with some concluding remarks in Section 5.

\section{Background and preliminaries}
\subsection{Preliminaries on decompositions of graphs}
In this section we summarize some preliminary facts on decompositions
of graphs needed in the argument of the following sections according
to Leimer \cite{Leimer}, Lauritzen \cite{lauritzen1996} 
and Malvestuto and Moscarini \cite{Malvestuto-Moscarini}. 

Let ${\cal G} = (\Delta, E)$ be an undirected graph, 
where $\Delta$ denotes the set of vertices and $E$ denotes the set of
edges. 
A subset of $\Delta$ which induces a complete subgraph is called a clique
of ${\cal G}$. 
Define the set of maximal cliques of ${\cal G}$ by ${\cal C}$. 
For a subset of vertices $V$, let ${\cal G}(V)$ denote the subgraph of 
${\cal G}$ induced by $V$.  When a graph ${\cal G}$ is not connected,
we can consider each connected component of ${\cal G}$ separately.
Therefore we only consider a connected graph from now on.

A subset $S \subset \Delta$ is said to be a separator of ${\cal G}$ if 
${\cal G}(\Delta \setminus S)$ is disconnected. 
For a separator $S$, a triple $(A,B,S)$ of disjoint
subsets of $\Delta$ such that 
$A \cup B \cup S = \Delta$ is said to form a decomposition of 
${\cal G}$. 
A separator $S$ is called a clique separator if $S$ is a clique of ${\cal G}$. 
For two non-adjacent vertices $\delta$ and $\delta'$, $S\subset \Delta$
is said to be a $(\delta,\delta')$-separator if $\delta \in A$ and 
$\delta' \in B$ for a decomposition $(A, B, S)$. 
A $(\delta,\delta')$-separator which is minimal with respect to 
inclusion relation is called a minimal $(\delta,\delta')$-separator
or a minimal vertex separator(Lauritzen\cite{lauritzen1996}). 
Denote by ${\cal S}$ the set of minimal vertex separators for all 
non-adjacent pairs of vertices in ${\cal G}$. 

A graph ${\cal G}$ is called reducible if $\Delta$ contains a clique
separator and otherwise ${\cal G}$ is said to be prime.
If ${\cal G}(V)$ is prime and ${\cal G}(V')$ is reducible for all 
$V'$ with 
$V \subsetneq V' \subset \Delta$, 
${\cal G}(V)$ is called a maximal prime subgraph (mp-subgraph) of 
${\cal G}$.  
For any reducible graph, its decomposition into mp-subgraphs is
uniquely defined(Leimer\cite{Leimer}, Malvestuto and
Moscarini\cite{Malvestuto-Moscarini}). 
Denote by ${\cal V}$ 
the set of subsets of $\Delta$ which induces mp-subgraphs of 
${\cal G}$ and let $\vert {\cal V} \vert = M$. 
Then there exists a sequence $V_1,\dots,V_M \in {\cal V}$
such that for every $m = 2,\ldots,M$ there exists $m' < m$ with  
$$
V_{m'} \supset V_m \cap (V_1 \cup \cdots \cup V_{m-1}). 
$$
Such a sequence is called a D-ordered sequence. 
Let $S_m := V_m \cap (V_1 \cup \cdots \cup V_{m-1})$ for $m=2,\ldots,M$.
Define 
$\bar {\cal S} = \{ S_2, \dots, S_m \}$. 
Denote by ${\cal S}_C$ the set of clique separators of ${\cal G}$.
Then $\bar {\cal S}$ satisfy $\bar {\cal S} = {\cal S} \cap {\cal S}_C$.
So we call elements of $\bar {\cal S}$ clique minimal vertex separators.
Leimer\cite{Leimer} showed that reducible graphs always have a D-ordered
sequence with $V_1 = V$ for any $V \in {\cal V}$. 
Hence a D-ordered sequence is not uniquely defined.
However $\bar {\cal S}$ is common for all D-ordered sequences.
\begin{example}[A reducible graph]
 The graph ${\cal G}$ in Figure \ref{fig:reducible} is an example of
 reducible graphs. 
 ${\cal G}$ has two clique minimal vertex separators $S_2 :=\{3,4\}$ and
 $S_3:=\{5,6\}$. 
 Define $V_1$, $V_2$ and $V_3$ by 
 $$
 V_1 := \{1,2,3,4\}, \quad 
 V_2 := \{3,4,5,6\}, \quad 
 V_3 := \{5,6,7,8\}
 $$
 as in Figure \ref{fig:reducible}. 
 Then ${\cal V}=\{V_1,V_2,V_3\}$ and 
 the sequence $V_1$, $V_2$, $V_3$ is a D-ordered sequence.
\begin{figure}[htbp]
 \centering
 \includegraphics{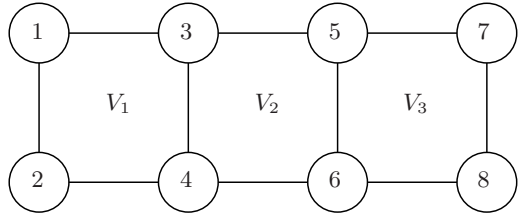} 
 \caption{A reducible graph with eight vertices}
 \label{fig:reducible}
\end{figure}
\end{example}
When ${\cal G}$ is a chordal graph, ${\cal V}$ and 
$\bar {\cal S}$ are equal to 
the set of maximal cliques ${\cal C}$ and 
the set of minimal vertex separators ${\cal S}$ of ${\cal G}$,
respectively. 
Hence $\vert {\cal C} \vert =M$. 
A D-ordered sequence for a chordal graph is called 
a perfect sequence of maximal cliques. 
There exists a perfect sequence of maximal cliques 
$C_1,\ldots,C_M$ such that $C_1=C$ for any 
$C \in {\cal C}$(e.g. Lauritzen\cite{lauritzen1996}).

For a vertex $\delta \in \Delta$, let $\mathrm{adj}(\delta)$ denote the
set of vertices adjacent to $\delta$. 
When $\mathrm{adj}(\delta)$ is a clique, $\delta$ is called a simplicial
vertex. 
A simplicial vertex is contained in only one maximal clique $C$.
Hence if $\delta$ is simplicial and $\delta \in C$, 
then 
$\mathrm{adj}(\delta) = C \setminus \{\delta\}$.
A sequence of vertices 
$\delta_1,\delta_2,\ldots,\delta_{\vert \Delta \vert}$
is called a perfect elimination order of vertices of ${\cal G}$ 
if $\delta_i$ is a simplicial vertex in
${\cal G}(\bigcup_{j=i}^{\vert \Delta \vert}\{\delta_j\})$.  
It is well known that ${\cal G}$ is a chordal graph if and only if 
${\cal G}$ possesses  a perfect elimination order (Dirac\cite{Dirac}). 
Let $C_1,\ldots,C_M$ be a perfect sequence of maximal cliques of 
a chordal graph ${\cal G}$. 
Define $R_1 := C_1 \setminus S_2$, 
$S_m := C_m \cap (C_1 \cup \cdots \cup C_{m-1})$ and 
$R_m := C_m \setminus S_m$ for $m=2,\ldots,M$. 
Let $r_m := \vert R_m \vert$. 
Let $\delta^m_1,\ldots,\delta^m_{r_m}$ be any sequence of vertices in 
$R_m$.
Then the sequence of vertices 
$$
\delta^M_1,\ldots,\delta^M_{r_M},
\delta^{M-1}_1,\ldots,\delta^{M-1}_{r_{M-1}},\ldots,
\delta^1_1,\ldots,\delta^1_{r_1}
$$
is a perfect elimination order of ${\cal G}$.
We call it a perfect elimination order 
induced by the perfect sequence $C_1,\ldots,C_M$. 

We introduce some notations and a basic formula for matrices
needed in the following sections.  
Let $A = \{a_{ij}\}$ be a $\vert \Delta \vert \times \vert \Delta \vert$
matrix. 
For two subsets $\Delta_1$ and $\Delta_2$ of $\Delta$, 
we let 
$$
A_{\Delta_1 \Delta_2} = \{a_{ij}\}_{i \in \Delta_1, j \in \Delta_2}
$$
denote a $\vert \Delta_1 \vert \times \vert \Delta_2 \vert$ submatrix of
$A$. 
Define 
$$
A_{\Delta_1 \Delta_2}^{-1} := (A^{-1})_{\Delta_1 \Delta_2}.
$$
We let $[A_{\Delta_1 \Delta_2}]^{\Delta}$ denote the 
$\vert \Delta \vert \times \vert \Delta \vert$ matrix such that
$$
([A_{\Delta_1 \Delta_2}]^{\Delta})_{ij} = \left\{
\begin{array}{ll}
 a_{ij} & \text{ if } i \in \Delta_1, j \in \Delta_2\\
 0 & \text{ otherwise }.
\end{array}
\right.
$$
Let $\Delta_2=\Delta_1^C$ and decompose a symmetric matrix $A$
into blocks as
$$
A = \left(
\begin{array}{cc}
 A_{\Delta_1 \Delta_1} &  A_{\Delta_1 \Delta_2}\\
 A^{\prime}_{\Delta_1 \Delta_2} &  A_{\Delta_2 \Delta_2}\\
\end{array}
\right).
$$
Here for notational simplicity we displayed $A$ for the case that
the elements of $\Delta_1$ are smaller than those of $\Delta_2$.
Suppose that $A_{\Delta_2 \Delta_2}$ and 
$
A_{\Delta_1 \Delta_1} - 
A_{\Delta_1 \Delta_2}
(A_{\Delta_2 \Delta_2})^{-1}
A^{\prime}_{\Delta_1 \Delta_2}
$
are both positive definite.
Then $A$ is positive definite and 
\begin{equation}
 \label{formula:matrix}
  A^{-1}_{\Delta_1 \Delta_1} = 
  \left(
   A_{\Delta_1 \Delta_1} - 
   A_{\Delta_1 \Delta_2}
   (A_{\Delta_2 \Delta_2})^{-1}
   A^{\prime}_{\Delta_1 \Delta_2}
  \right)^{-1}.
\end{equation}

\subsection{Gaussian graphical models}
Let ${\cal M}^+({\cal G})$ denote the set of 
$\vert \Delta \vert \times \vert \Delta \vert$ positive definite
matrices $K = \{k_{ij}\}$ 
such that $k_{ij}=0$
for all $i$, $j \in \Delta$
with $i \neq j$ and $(i,j) \notin E$.
Then the Gaussian graphical model for 
$\vert \Delta \vert$ dimensional random variable 
$Y=(Y^{(1)},\ldots,Y^{(\vert \Delta \vert)})'$ 
associated with a graph $\cal G$ 
is defined as
$$
Y \sim N_{\vert \Delta \vert}(\mu, \Sigma), \quad 
K := \Sigma^{-1} \in {\cal M}^+({\cal G}).
$$
$k_{ij}=0$ indicates the conditional independence between 
$Y^{(i)}$ and $Y^{(j)}$ given all other variables. 
In what follows, we identify ${\cal M}^+({\cal G})$ with the
corresponding graphical model. 
Let $y_1,\ldots,y_n$ be i.i.d.\ samples from ${\cal M}^+({\cal G})$.  
Define $\bar{y}$ and $W$ by 
$$
\bar{y} := n^{-1}\sum_{i=1}^n y_i, \quad 
W := \sum_{i=1}^n 
(y_i - \bar{y})(y_i - \bar{y})', 
$$
respectively.
The likelihood equation is written as
$$
L(\mu, K) \propto (\mathrm{det}K)^{n/2}
\exp\left\{  
- \frac{1}{2} \mathrm{tr} KW - 
\frac{n}{2} \mathrm{tr} K (\bar{y} - \mu)(\bar{y} - \mu)'
\right\}.
$$
The MLE of $\mu$ is $\bar{y}$.
The likelihood equations involving $K$ are expressed as
\begin{equation}
 \label{eq:likelihood}
n K^{-1}_{CC} = n \Sigma_{CC} = W_{CC}, \quad \forall C \in {\cal C}.
\end{equation}

For a subset of vertices $V \subset \Delta$, 
let $\hat{K}_{V V}$ denote the MLE of $K$ in the marginal
model associated with the graph ${\cal G}(V)$ 
based on the data in $V$-marginal sample only. 
Let $S$ be a clique separator of ${\cal G}$ and 
$(A, B, S)$ be a decomposition of ${\cal G}$. 
Let $V = A \cup S$ and $V' = B \cup S$. 
Then the MLE $\hat{K}$ is known to satisfy 
\begin{equation}
 \label{localize-1}
 \hat{K} = 
  \left[
   \hat{K}_{VV}
  \right]^{\Delta} 
  +
  \left[
   \hat{K}_{V'V'}
  \right]^{\Delta} 
  -
  n
  \left[
   (W_{S S})^{-1}
  \right]^{\Delta}
\end{equation}
(e.g. Lauritzen \cite{lauritzen1996}). 
More generally, for the set of mp-subgraphs ${\cal V}$ and 
the set of clique minimal vertex separators $\bar {\cal S}$, 
\begin{equation}
 \label{localize}
 \hat{K} = \sum_{V \in {\cal V}}
  \left[
   \hat{K}_{V V}
  \right]^{\Delta} 
  -
  n
  \sum_{S \in \bar {\cal S}}
  \left[
   (W_{S S})^{-1}
  \right]^{\Delta}. 
\end{equation}
As mentioned in the previous section, when the model is decomposable,  
${\cal V} = {\cal C}$ and ${\cal S} = \bar {\cal S}$.
Hence from (\ref{eq:likelihood}), 
$\hat{K}$ is explicitly written by 
$$
\hat{K} = n \sum_{C \in {\cal C}}
\left[
(W_{CC})^{-1}
\right]^{\Delta} 
-
n
\sum_{S \in {\cal S}}
\left[
(W_{S S})^{-1}
\right]^{\Delta}.
$$

However for other graphical models, 
we need some iterative procedure 
for computing the first term on the right-hand side
of  (\ref{localize}).  
The following IPS is commonly used for 
this purpose. 
Note that the second term 
on the right-hand side of  (\ref{localize}) needs to be calculated
only once and is not involved in the iterative procedure.
IPS consists of iteratively and successively adjusting 
$\Sigma_{CC}$ for $C \in {\cal C}$ as in (\ref{eq:likelihood}). 
Let $K^{t}$ and $\Sigma^{t} = (K^{t})^{-1}$ denote the estimated $K$ and
$\Sigma$ at the $t$-th step of iteration, respectively. 
Define $D := \Delta \setminus C$ for $C \in {\cal C}$. 
Then the $t$-th iterative step of the IPS is described by 
the update rule of $K^{t}$ as follows.\\ 
{\bf Algorithm 0} (Iterative proportional scaling for $K$)
 \begin{description}
  \item[Step 0] $t \leftarrow 1$ and select an initial estimate
	     $K^0$ such that $K^0 \in {\cal M}^+({\cal G})$.
  \item[Step 1] Select a maximal clique $C \in {\cal C}$ and update $K$
	     as follows,  
	     \begin{align}
	      \label{IPS}
	      (K^{t})_{CC} & \leftarrow (W_{CC})^{-1}
	      +(K^{t-1})_{CD}((K^{t-1})_{DD})^{-1}(K^{t-1})_{DC}\\
	      (K^{t})_{CD} & \leftarrow (K^{t-1})_{CD}\notag\\
	      (K^{t})_{DD} & \leftarrow (K^{t-1})_{DD}.\notag
	     \end{align}
  \item[Step 2] If $K^{t}$ converges, exit. Otherwise 
	     $t \leftarrow t+1$ and go to Step 1.
 \end{description}
From (\ref{formula:matrix}), 
it is easy to see that 
$$
(K^{t})^{-1}_{CC} = (\Sigma^{t})_{CC} = W_{CC}/n.
$$
In Step 1, only the $C$-marginal of $K$ is updated.
Therefore we note that 
if the initial estimate $K^0$ satisfies 
$K^0 \in {\cal M}^+({\cal G})$,  
$K^t$ satisfies $K^{t} \in {\cal M}^+({\cal G})$ for all $t$. 
By using the argument of Csisz\'ar \cite{Csiszar}, 
the convergence of the algorithm to the MLE
$$
\lim_{n \to \infty} K^{t} = \hat{K}, \quad
\lim_{n \to \infty} \Sigma^{t} = \hat{\Sigma}
$$
is guaranteed (Speed and Kiiveri \cite{Speed-Kiiveri} and Lauritzen
\cite{lauritzen1996}).  

The fact (\ref{localize-1}) suggests that 
the decomposition $(A, B, S)$ for a clique separator $S$
can localize the problem, that is, 
in order to obtain the MLE $\hat{K}$, 
it suffices to compute the MLE of submatrix $\hat{K}_{VV}$ 
and $\hat{K}_{V'V'}$, where $V=A \cup S$ and $V'= B \cup S$. 
Especially if the decomposition by mp-subgraphs is obtained, 
we need only to compute $\hat{K}_{VV}$ for each $V \in {\cal V}$.

From a complexity theoretic point of view, the $t$-th iterative step
(\ref{IPS}) requires 
$O(\vert D \vert^3 + \vert D \vert^2 \vert C \vert + 
\vert D \vert \vert C \vert^2)$ time. 
The graphical model with 
$$
{\cal C} = \{\{1,2\},\{2,3\},\ldots,\{\vert \Delta \vert-1, \vert
\Delta \vert\}, \{\vert \Delta \vert, 1\}\}
$$ 
is called the 
$\vert \Delta \vert$-dimensional cycle model or 
$\vert \Delta \vert$ cycle model.
Note that the cycle is prime. 
In the case of $\vert \Delta \vert$ cycle model, 
$\vert C \vert = 2$ and 
$\vert D \vert = \vert \Delta \vert -2$. 
Hence when $\vert \Delta \vert \ge 4$, 
the iterative step (\ref{IPS}) requires 
$O((\vert \Delta \vert -2 )^3)$ time. 
In the next section we propose a more efficient algorithm for
computing (\ref{IPS}) by using the structure of a chordal extension of a 
graph. 

\section{A localized algorithm of IPS}
From (\ref{formula:matrix}), we note that (\ref{IPS}) is rewritten as
\begin{align}
 \label{IPS-2}
  (K^{t})_{CC} & = 
 (W_{CC})^{-1} + (K^{t-1})_{CC} - ((\Sigma^{t-1})_{CC})^{-1}\notag\\
 & = (W_{CC})^{-1} + (K^{t-1})_{CC} - ((K^{t-1})^{-1}_{CC})^{-1}.
\end{align}
In this section we provide an efficient algorithm to compute 
$((K^{t-1})^{-1}_{CC})^{-1}$ by using the structure of ${\cal G}$. 
For a graph ${\cal G}$, 
let ${\cal G}^*$ be a chordal graph obtained by triangulating 
${\cal G}$.  
Such ${\cal G}^*$ is called a chordal extension of ${\cal G}$.
Figure \ref{figure:5cycle} represents an example of the five cycle model
and its chordal extension. 

\begin{figure}[htbp]
 \centering
 \begin{tabular}{cc}
 \includegraphics{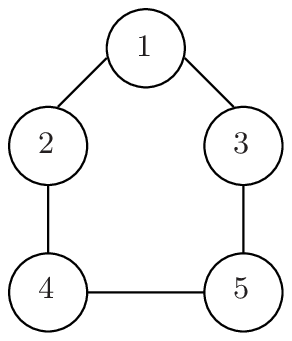} &
 \includegraphics{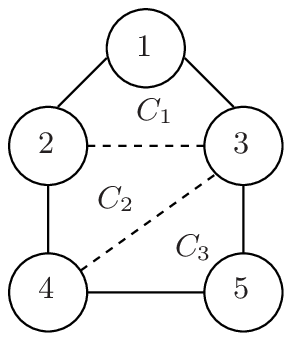}\\
  (i) the five cycle model & (ii) a chordal extension of (i)
 \end{tabular}
 \caption{The five cycle model and its chordal extension} 
 \label{figure:5cycle} 
\end{figure}

Let $C^*_1,\ldots,C^*_M$ be a perfect sequence of the maximal cliques of
${\cal G}^*$ with $C^*_1 \supset C$. 
Let $S^*_m := C^*_m \cap (C^*_1 \cup \cdots \cup C^*_{m-1})$ for 
$m=2,\ldots,M$ be minimal vertex separators of ${\cal G}^*$. 
We propose the following algorithm to 
compute $((K^{t-1})^{-1}_{CC})^{-1}$ for each 
maximal clique $C \in {\cal C}$.

\begin{algorithm}[Computing $((K^{t-1})^{-1}_{CC})^{-1}$]
 \label{alg:1}
 ~\\
 \vspace{-0.5cm}
 {\rm 
 \begin{description}
  \item[Step 0] $m \leftarrow M$ and $K^* \leftarrow K^{t-1}$.
  \item[Step 1] If $m \neq 1$, select a simplicial vertex 
	     $\delta \in C_m^*$ of ${\cal G}^*$.  \\
	     \hspace*{0.35cm}
	     If $m=1$, select a vertex $\delta \notin C$. \\
	     \hspace*{0.35cm}
	     Let $Q = C_m^* \setminus \{\delta\}$.
  \item[Step 2] Update $K^*_{QQ}$ by 
	     \begin{equation}
	      \label{step2}
	       K^*_{QQ} \leftarrow K^*_{QQ} -
	       (k^*_{\delta \delta})^{-1}K^*_{Q \delta}K^*_{\delta Q}.
	     \end{equation}
  \item[Step 3]
	     Update $C^*_m$, ${\cal G}^*$ and $\Delta$ as follows,
	     $$
	     C^*_m \leftarrow Q,\quad 
	      {\cal G}^* \leftarrow {\cal G}^*(\Delta \setminus \{\delta \}), 
	     \quad 
	      \Delta \leftarrow \Delta \setminus \{ \delta \}.
	     $$
	     \hspace*{0.35cm}
	     If $C^*_m = S^*_m$, 
	     $m \leftarrow m-1$.\\
	     \hspace*{0.35cm}
	     If $C^*_m = C$, return $K^*_{CC}$.
	     Otherwise, go to Step 1.
 \end{description}
 }
\end{algorithm}

Now we state the main theorem of this paper.

\begin{theorem}
 The output $K^*_{CC}$ of Algorithm \ref{alg:1} is equal to 
 $((K^{t-1})^{-1}_{CC})^{-1}$. 
\end{theorem}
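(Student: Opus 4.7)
The plan is to recognize that Step 2 of Algorithm \ref{alg:1} is exactly one step of Gaussian elimination (Schur complementation) on the working matrix $K^*$ with respect to the vertex $\delta$. Applying formula (\ref{formula:matrix}) with $\Delta_1$ equal to the current $\Delta\setminus\{\delta\}$ and $\Delta_2=\{\delta\}$ gives
$$
\bigl((K^*)^{-1}_{\Delta_1\Delta_1}\bigr)^{-1}
=K^*_{\Delta_1\Delta_1}-(k^*_{\delta\delta})^{-1}K^*_{\Delta_1\delta}K^*_{\delta\Delta_1},
$$
so one full Schur complement on $\delta$ replaces the working matrix by $((K^*)^{-1}_{\Delta_1\Delta_1})^{-1}$. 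Composing such steps over all vertices of $\Delta\setminus C$ gives $((K^{t-1})^{-1}_{CC})^{-1}$ by a standard telescoping argument on nested Schur complements. The task therefore reduces to showing that (i) Algorithm \ref{alg:1} carries out these Schur complements for a legitimate elimination order that exhausts exactly $\Delta\setminus C$, and (ii) restricting the update to the block $K^*_{QQ}$ in (\ref{step2}) is equivalent to the full Schur complement.

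For (i), I would argue that the order in which Algorithm \ref{alg:1} removes vertices is precisely a perfect elimination order of ${\cal G}^*$ of the type described in Section 2.1 (reversed), beginning with $R^*_M=C^*_M\setminus S^*_M$ and terminating after $C^*_1\setminus C$. The peeling of $R^*_m$ for $m\ge 2$ is justified by the existence of simplicial vertices inside each $C^*_m$ in the current subgraph ${\cal G}^*(V_m\cup\cdots\cup V_M\setminus{\rm eliminated})$, which is chordal; when the algorithm reaches $m=1$ the current graph is ${\cal G}^*(C^*_1)$, which is complete, so every vertex of $C^*_1\setminus C$ is simplicial and Step 1's choice is valid. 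Since $C\subset C^*_1$, the algorithm stops exactly when $\Delta$ has been reduced to $C$.

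For (ii), I would prove by induction the invariant that at the beginning of each pass through Step 1 the working matrix satisfies $K^*\in{\cal M}^+({\cal G}^*)$ for the \emph{current} ${\cal G}^*$, i.e.\ $k^*_{ij}=0$ whenever $i\neq j$ are non-adjacent in the current ${\cal G}^*$. The base case holds because $K^{t-1}\in{\cal M}^+({\cal G})\subset{\cal M}^+({\cal G}^*)$, as ${\cal G}^*$ adds edges to ${\cal G}$. For the inductive step, if $\delta$ is simplicial in the current ${\cal G}^*$ with neighborhood $Q=C^*_m\setminus\{\delta\}$, then $k^*_{i\delta}=0$ for every $i\in\Delta\setminus(Q\cup\{\delta\})$, so the rank-one correction $(k^*_{\delta\delta})^{-1}K^*_{\cdot\delta}K^*_{\delta\cdot}$ is supported on $Q\times Q$; hence the full Schur complement agrees with the localized update (\ref{step2}). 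Moreover no new fill-in is introduced, because $Q$ is already a clique in ${\cal G}^*$, so the updated $K^*$ restricted to $\Delta\setminus\{\delta\}$ still lies in ${\cal M}^+({\cal G}^*(\Delta\setminus\{\delta\}))$, closing the induction.

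Combining (i) and (ii), the output $K^*_{CC}$ is the composition of exactly $|\Delta|-|C|$ Schur complements of $K^{t-1}$ with respect to the vertices of $\Delta\setminus C$, which by (\ref{formula:matrix}) equals $((K^{t-1})^{-1}_{CC})^{-1}$. The main obstacle is the no-fill-in invariant in (ii); this is where the chordality of ${\cal G}^*$ and the simplicial choice of $\delta$ do the real work, since these are precisely what guarantee that the localized update in (\ref{step2}) coincides with the corresponding full Schur complement.
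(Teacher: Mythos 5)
Your proposal is correct and follows essentially the same route as the paper: identify the update (\ref{step2}) as a Schur complement on a simplicial vertex $\delta$, observe that $K^*_{Q_2\delta}=\bm{0}$ (with $Q_2=\Delta\setminus C^*_m$) so the rank-one correction is supported on $Q\times Q$ and the localized update agrees with the full one, then iterate along the perfect elimination order induced by $C^*_1,\ldots,C^*_M$. Your version merely makes explicit the induction that the paper compresses into ``by iterating the procedure,'' and your invariant $K^*\in{\cal M}^+({\cal G}^*(\cdot))$ for the current chordal extension is the right bookkeeping (the paper states the weaker-looking ${\cal M}^+({\cal G}(Q_1))$, but the substance is identical).
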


\begin{proof}
 Let $\delta \in C^*_M$ be a simplicial vertex in ${\cal G}^*$. 
 Define 
 $Q := C^*_M \setminus \{\delta\}$, 
 $Q_1 := \Delta \setminus \{\delta\}$ and 
 $Q_2 := \Delta \setminus C^*_M$.
 Since $\mathrm{adj}(\delta) \subset C^*_M$ and 
 $K^{(t-1)} \in {\cal M}^+({\cal G})$, 
 $(K^{t-1})_{Q_2 \delta} = \bm{0}$. 
 Noting that $Q \cup Q_2 = Q_1$, 
 we have from (\ref{formula:matrix}) 
 \begin{align*}
 ((K^{t-1})_{Q_1 Q_1}^{-1})^{-1} 
  & = (K^{t-1})_{Q_1 Q_1} - 
   (k^{t-1})^{-1}_{\delta \delta} 
   (K^{t-1})_{Q_1 \delta} (K^{t-1})_{\delta Q_1}\\
  & = (K^{t-1})_{Q_1 Q_1} - 
  (k^{t-1})^{-1}_{\delta \delta} 
  \left(
  \begin{array}{c}
   \bm{0} \\
   (K^{t-1})_{Q \delta} 
  \end{array}
  \right)
  \left(
  \begin{array}{cc}
   \bm{0} & (K^{t-1})_{\delta Q}
  \end{array}
 \right)\\
  & = (K^{t-1})_{Q_1 Q_1} - 
 \left(
 \begin{array}{cc}
  \bm{0} & \bm{0}\\
  \bm{0} & 
   (k^{t-1})^{-1}_{\delta \delta} 
   (K^{t-1})_{Q \delta} (K^{t-1})_{\delta Q}
 \end{array}
 \right)\\
 \end{align*}
 and 
 $((K^{t-1})_{Q_1 Q_1}^{-1})^{-1} \in {\cal M}^+({\cal G}(Q_1))$, 
 where 
 $(k^{t-1})_{\delta \delta}$ is the $(\delta,\delta)$-th element of 
 $K^{t-1}$. 
 By iterating the procedure in accordance with the perfect elimination
 order induced by the perfect sequence $C^*_1,\ldots,C^*_M$, we 
 complete the proof. 
\end{proof}

In Algorithm \ref{alg:1}, the triangulation ${\cal G}^*$ is arbitrary.
However for every iterative step of adjusting the $C$-marginal, 
we have to use the perfect sequence with $C_1^* \supset C$. 

\begin{example}[the five cycle model]
 Consider the five cycle model in Figure \ref{figure:5cycle}-(i).
 $K$ is expressed by 
 $$
 K = \left(
 \begin{array}{ccccc}
  k_{11} & k_{12} & k_{13} & 0 & 0\\
  k_{12} & k_{22} & 0 & k_{24} & 0\\
  k_{13} & 0 & k_{33} & 0 & k_{35}\\
  0 & k_{24} & 0 & k_{44} & k_{45}\\
  0 & 0 & k_{35} & k_{45} & k_{55}\\
 \end{array}
 \right).
 $$
 By adding the fill-in edges $\{2,3\}$ and $\{3,4\}$, 
 a triangulated graph ${\cal G}^*$ can be obtained as in Figure
 \ref{figure:5cycle}-(ii). 
 Consider the case where $C = \{1,2\}$. 
 Define $C_1^* = \{1,2,3\}$, $C_2^*=\{2,3,4\}$ and 
 $C_3^* = \{3,4,5\}$.
 Then the sequence $C^*_1,C^*_2,C^*_3$ is perfect and 
 it induces a perfect elimination order $5,4,3,2,1$.
 The update of $K^*$ in step 2 in accordance with the perfect
 elimination order is described as follows,
 \begin{align*}
 K^*_{34,34} 
 & \leftarrow 
 K^*_{34,34} - 
 (k^*_{55})^{-1}
 \left(
 \begin{array}{c}
  k^*_{35}\\
  k^*_{45}\\  
 \end{array}
 \right)
 (k^*_{35} \; k^*_{45}), \\
 K^*_{23,23} 
 & \leftarrow 
 K^*_{23,23} - 
 (k^*_{44})^{-1}
 \left(
 \begin{array}{c}
  k^*_{24}\\
  k^*_{34}\\  
 \end{array}
 \right)
 (k^*_{24} \; k^*_{34}), \\
 K^*_{12,12} 
 & \leftarrow 
 K^*_{12,12} - 
 (k^*_{33})^{-1}
 \left(
 \begin{array}{c}
  k^*_{13}\\
  k^*_{23}\\  
 \end{array}
 \right)
 (k^*_{13} \; k^*_{23}).
 \end{align*}
 Then  
 $K^*_{12,12} = ((K^{t-1})^{-1}_{12,12})^{-1}
 = ((K^{t-1})^{-1}_{CC})^{-1}
 $. 
 \hfill\qed
\end{example}

We now analyze the computational cost of the proposed algorithm. 
In Step 2, the running time of the calculation of (\ref{step2})
is as follows,
\begin{itemize}
 \item $K^*_1 := (k^*_{\delta \delta})^{-1}K^*_{Q \delta}$ requires 
       $\vert Q \vert$ divisions ;
 \item $K^*_2:=K^*_1 K^*_{\delta Q}$
       requires $\vert Q \vert^2$ multiplications ;
 \item $K^*_{QQ} - K^*_2$ requires $\vert Q \vert^2$ subtractions. 
\end{itemize}
Define $R^*_1 := C^*_1 \setminus C$ and 
$R^*_m := C^*_m \setminus S^*_m$ for $m=2,\ldots,M$. 
$\vert Q \vert$ ranges over 
$\{\vert C_m \vert - j \mid  1 \le j \le R^*_m, \ 1 \le m \le M\}$.
Let $\mu$, $\gamma$ and $\sigma$ measure the time units required by a
single multiplication, division and subtraction, respectively.
Then the running time of Algorithm \ref{alg:1} amounts to 
{\allowdisplaybreaks
\begin{align*}
 &(\mu + \sigma) 
 \sum_{m=1}^{M} 
 \sum_{j=1}^{R^*_m} 
 \left(
 \vert C^*_m \vert - j
 \right)^2
 +
 \delta 
 \sum_{m=1}^{M} 
 \sum_{j=1}^{R^*_m} 
 \left(
 \vert C^*_m \vert - j
 \right)\\
 & \qquad =
 (\mu + \sigma) 
 \sum_{m=1}^{M} 
 \Bigr\{
 \vert R^*_m \vert \vert C^*_m \vert^2
 - 
 \vert R^*_m \vert \vert C^*_m \vert
 -
 \vert R^*_m \vert^2 \vert C^*_m \vert \\
 & \qquad \qquad \qquad \qquad \qquad \qquad \qquad
 + 
 \frac{\vert R^*_m \vert (\vert R^*_m \vert +1)
 (2\vert R^*_m \vert + 1)}{6}
 \Bigl\}\\
 & \qquad\qquad + 
 \delta \sum_{m=1}^{M}
 \left\{
 \vert R^*_m \vert \vert C^*_m \vert 
 -
 \frac{(1+ \vert R^*_m \vert) \vert R^*_m \vert}{2}
 \right\}
 + 2 \sigma \vert C \vert^2. 
\end{align*}
}
Since $\vert C^*_m \vert \ge \vert R^*_m \vert$, 
the computational cost of Algorithm \ref{alg:1} is 
$ O\left(\sum_{m=1}^M \vert R^*_m \vert \vert C^*_m \vert^2 \right)$. 
Once $((K^{t-1})^{-1}_{CC})^{-1}$ is obtained, 
$O(\vert C \vert^2)$ additions are required 
to compute (\ref{IPS-2}).
Note that we can compute $(W_{CC})^{-1}$ once 
before the IPS procedure.  Hence 
the computational cost of the $t$-th iterative step amounts to 
$ O\left(\vert C \vert^2 + \sum_{m=1}^M \vert R^*_m \vert \vert
C^*_m \vert^2 \right)$.  
In the case of cycle models, 
$\vert C \vert = 2$, 
$M = \vert \Delta \vert - 2$, 
$\vert C^*_m \vert = 3$ and 
$\vert R^*_m \vert = 1$.
Thus the computational cost is $O(\vert \Delta \vert)$.
As mentioned in the previous section, 
the direct computation of 
$((K^{t-1})^{-1}_{CC})^{-1}$
requires 
$O((\Delta \setminus C)^3)=O(\vert D \vert^3 )$ time and 
in the case of cycle models it requires 
$O((\vert \Delta \vert -2)^3)$ time.
Hence we can see the efficiency of the proposed algorithm.

\section{Numerical experiments for cycle models}
In this section we compare the localized IPS proposed in the previous
section with the direct computation of the IPS by numerical experiments.   
We consider the $\vert \Delta \vert$ cycle models with 
$\vert \Delta \vert = 5,10,50,100,200,300,500,1000$.
We set $K = I_{\vert \Delta \vert}$. 
We generate 100 Wishart matrices 
$W$ with the parameter $I_{\vert \Delta \vert}$ and the degrees of
freedom $\vert \Delta \vert$ and computed the MLE $\hat{K}$ for 
$\vert \Delta \vert$ cycle models by using the proposed algorithm and
the direct computation of the IPS.    
We set the initial estimate $K^0 := I_{\vert \Delta \vert}$.
As a convergence criterion, we used 
$\sum_{i,j} |k^t_{ij}| \le 10^{-6}$. 
The computation was done on a Intel Core 2 Duo 3.0 GHz CPU machine by
using R language. 
Table \ref{table:1} presents the average CPU time per one iterative
step to update $K^{t-1}$ in (\ref{IPS-2}) for both algorithms.

We can see the competitive performance of the proposed algorithm when 
$\vert \Delta \vert =5$ and $\vert \Delta \vert \ge 200$.
However the direct computation is faster than the proposed one for
$\vert \Delta \vert =10$, $50$, $100$.
In the update procedure of direct computation (\ref{IPS-2}), 
the computation of $((K^{t-1})_{DD})^{-1}$ is the most computationally
expensive and in theory it requires $O(\vert D \vert^3)$ time. 
Table \ref{table:2} shows the average CPU time for computing a 
$\vert \Delta \vert \times \vert \Delta \vert$  
inverse matrix by using R language on the same machine. 
As seen from the table, while the CPU time for computing the inverse of
a matrix increases nearly at the rate $O(\vert \Delta \vert^3)$ 
for $\vert \Delta \vert > 100$, 
it increases too slowly for relatively small $\vert \Delta \vert$.
On the other hand, we can see from Table \ref{table:1} that the CPU time
of the proposed algorithm almost linearly increases in proportion to
$\vert \Delta \vert$ which follows the theoretical result in the
previous section.  
These are the reasons why the proposed algorithm is slower than
the direct computation for relatively small $\vert \Delta \vert$.  
When $\vert \Delta \vert \ge 200$, however, 
the computational cost of $((K^{t-1})_{DD})^{-1}$ is not ignorable and
the proposed algorithm shows a considerable reduction of computational
time.
In practice the performances for large models are more crucial.
In this sense the proposed algorithm is considered to be efficient.

\begin{table}[htbp]
 \centering
 \caption{CPU time per one iterative procedure for $|\Delta|$ cycle
 models}
 \label{table:1}
 \begin{tabular}{rcc} \hline
  $\vert \Delta \vert$ & Algorithm \ref{alg:1} & direct computation \\ 
  \hline 
  5 & 1.590 & 2.261\\
 10 & 3.442 & 2.523\\
 50 & 18.63 & 5.482\\
100 & 38.11 & 17.48\\
200 & 78.87 & 104.91\\
300 & 120.03 & 361.01\\ 
500 & 225.94& 1292.1\\
1000& 511.60& 6625.8\\ \hline
  & &  ($10^{-2}$ CPU time)\\
 \end{tabular}
\end{table}

\begin{table}[htbp]
 \centering
 \caption{CPU time for calculating 
 $\vert \Delta \vert \times \vert \Delta \vert$ inverse matrices}
 \label{table:2}
 \begin{tabular}{cc} \hline
  $\vert \Delta \vert$ & CPU time \\  \hline 
  5 & 0.027\\
  10 & 0.031\\
  50 & 0.058\\
  100 & 0.286\\
  200 & 1.789\\
  300 & 5.602\\
  500 & 24.666\\
  1000 & 207.66\\ \hline
  & ($10^{-2}$ CPU time)\\
\end{tabular}
\end{table}

\section{Concluding remarks}
In this article we discussed the localization to reduce the
computational burden of the IPS in two ways. 
We first showed that the decomposition into mp-subgraphs of the graph 
can localize the IPS.  
Next we proposed a localized algorithm of the iterative step
in the IPS by using the structure of a chordal extension of 
the graphical model for each mp-subgraph.  
The proposed algorithm costs $O(\vert \Delta \vert)$ in the case of 
cycle models and some numerical experiments confirmed the theory
for large models.

As mentioned in Section 1, the implementation of the IPS requires
enumeration of all maximal cliques of the graph and this enumeration 
has an exponential complexity.   
In addition, the proposed algorithm also requires some characteristics
of graphs, that is, a chordal extension, perfect sequences and perfect
elimination orders of the chordal extension.
In this sense, the application of the IPS may be limited. 
However in the case where the structure of the model is simple or
sparse, it may be feasible to obtain characteristics of graphs. 
In such cases, the proposed algorithm is considered to be effective.

\begin{center}
 {\bf Acknowledgment}
\end{center}
 The authors are grateful to two anonymous referees for constructive
 comments and suggestions which have led to improvements in the
 presentation of the paper.

\bibliographystyle{plain}
\bibliography{hara-takemura-IPS}
\end{document}